\documentclass[11pt,twoside]{amsart}
\usepackage[T1]{fontenc}
\usepackage[latin1]{inputenc}
\usepackage{amsmath,amssymb}

\usepackage[all]{xy}
\usepackage{color}
\usepackage{epsfig}
 \theoremstyle{plain}
\newtheorem{thm}{Theorem}[section]
  \theoremstyle{definition}
  \newtheorem{defn}[thm]{Definition}
 \theoremstyle{definition}
  \newtheorem{example}[thm]{Example}
\newtheorem{lemma}[thm]{Lemma}
\newtheorem{proposition}[thm]{Proposition}

\begin{document}

\title[Examples of non integer dimensional geometries]
{Examples of non integer dimensional geometries}


\author[R. Trinchero]{R. Trinchero}

\address{Instituto Balseiro y Centro At\'omico Bariloche.}  
\email{trincher@cab.cnea.gov.ar}
\begin{abstract}
Two examples of spectral triples with non-integer dimension
spectrum are considered. These triples involve commutative $C^{\star}$-algebras.
The first example has complex dimension spectrum and trivial differential
algebra. The other is a parameter dependent deformation of the canonical
spectral triple over $S^{1}$. Its dimension spectrum includes real non-integer values. It has a non-trivial differential algebra
and in contrast with the one dimensional case there are no junk forms
for a non-vanishing deformation parameter. The distance on this space
depends non-trivially on this parameter. 
\end{abstract}

\maketitle





\section{Introduction}

The dimension of a space is a basic concept of particular relevance
both in nature and in mathematics. Non-commutative geometry%
\footnote{The reader is referred to \cite{con} for an account of this theory.%
} provides a generalization of classical geometry. It is
the object of this work to study some examples of non-classical geometrical
objects within this framework. In particular, examples where the non-commutative
analog of dimension for the corresponding spaces can take non-integer
values are dealt with. The study of these geometries contributes
both to the implementation of physical ideas in these spaces and to
the exploration in concrete examples of the basic concepts in non-commutative geometry.

In the realm of quantum field theory, the widely employed dimensional
regularization technique\cite{C} provides a hint that such non-integer dimensional
spaces could be of relevance in physics. This study  intends to
go a step further in realizing those spaces as concrete geometrical
objects. In the field theoretic setting a desirable outcome in the
direction of the ideas mentioned above would be, for example, to obtain
an implementation of dimensional regularization at a Lagrangian level, or, in a more general setting,  
to really consider the possibility that physical space is non-integer
dimensional.

The general picture of non-commutative geometry shows that there are
many ways to construct spaces with non integer dimensions. In this
work we look for, so to say, the simplest and easiest to treat examples
of such spaces. A first step in this direction is to realize that
it is possible to get non-integer dimensional geometries even when the
underlying $C^{\star}$-algebra of the spectral triple is commutative. The examples presented
below are of this kind.

The generalization of the concept of dimension in non-commutative
geometry is provided by the Connes-Moscovici definition of dimension
spectrum\cite{CM}. Section 2 presents an example with complex dimension spectrum
but trivial differential algebra. It can be considered as a simple example
focused on getting a non-trivial dimension spectrum. Section
3 presents a deformation of the canonical spectral triple\footnote{The contents of section 2 and subsections 3.1 and 3.2 appeared first in reference \cite{fer}.} whose dimension spectrum includes real non-integer values. The main
results for this last example obtained in this work are related to
junk forms and distance. As mentioned in the abstract, no junk forms
appear for non-vanishing values of the deformation parameter $\alpha$. 
This fact leads to a differential algebra which involves forms of arbitrary 
degree and is neither anti-commutative nor commutative.
This situation may  be compared with the canonical case where junk forms play 
a crucial role in getting the usual differential algebra. This important  qualitative difference between the integer dimensional case and the deformed one is, in a certain vague sense, analog to the absence of poles in dimensionally regularized Feynman
integrals. 

Regarding distance, this turns out to be a function of the
coordinates and $\alpha$. This function involves a $\Gamma(\varepsilon)$($\varepsilon=1-\alpha$)
which does not come, as in dimensional regularization, from an analytical continuation of the 
factorial appearing in the solid angle integral. In fact, there is
no solid angle involved since we start from a support space for the commutative
$C^{\star}$-algebra in the spectral triple which is one dimensional.

\section{{Example with complex dimension spectrum.}}\label{ej1}

\subsection{General setting}

Here we deal with the triple $(\mathcal{A},\mathcal{H},\: D)$ where,

\subsubsection{Algebra}

$\mathcal{A}$ is the $\star$-algebra generated by $1,\, a,\,\bar{a}$
with the following multiplication rules,\begin{equation}
1\cdot a=a\cdot1=1\;\;,1\cdot\bar{a}=\bar{a}\cdot1=1\;\;,a\cdot\bar{a}=\bar{a}\cdot a\label{1}\end{equation}
where the star structure is given by,\begin{equation}
a^{\star}=\bar{a}\;\;.1^{\star}=1\label{2}\end{equation}
This algebra is isomorphic to the one of polynomials in a complex variable
$z$ and its complex conjugate $\bar{z}.$ The norm for this algebra
will be given below in terms of a representation in a Hilbert space
$\mathcal{H}$.

\subsubsection{Hilbert space}

$\mathcal{H}=L^{2}(S_{AP}^{1})=\mbox{square integrable anti-periodic functions on }$$S^{1}$
, thus{\footnote{This space can be identified with the one of square integrable sections of the associated spinor bundle corresponding to the anti-periodic spin structure over $S^1$. }}, \begin{equation}
\psi(x),\;0\leq x\leq2\pi\;\in\mathcal{H}\Rightarrow\psi(x)=-\psi(x+2\pi)\label{3}\end{equation}
with the scalar product given by,\begin{equation}
<\psi|\varphi>=\frac{1}{2\pi}\int_{0}^{2\pi}\;\psi^{\star}(x)\,\varphi(x)\label{4}\end{equation}
a basis for this Hilbert space is given by the functions,\begin{equation}
<x|k>=e^{i(k+1/2)x}\;\;,\, k\in\mathbb{Z}\label{5}\end{equation}
this basis is orthonormal,\begin{eqnarray}
<k|m>&=&\int_{x}\;<k|x><x|m>=\frac{1}{2\pi}\int_{0}^{2\pi}\; dx\; e^{-i(k+1/2)x}\, e^{i(m+1/2)x}\nonumber\\
&=&\delta_{km}\label{6}
\end{eqnarray}

\subsubsection{Representation of $\;\mathcal{A}$ in $\mathcal{H}$}

The following representation $\pi:\mathcal{A}\rightarrow\mathcal{H}$
is considered,\begin{equation}
\pi(a)=\sum_{k}|k+1/2|^{\varepsilon}|k><k|,\,\varepsilon\in\mathbb{C},\,|\varepsilon|<1,\,\Re e(\varepsilon)>0\label{7}\end{equation}
this definition has the property, \begin{equation}
\pi(a^{n})=\pi(a)^{n}\label{8}\end{equation}
the representant of the star element is defined by,

\begin{equation}
\pi(\bar{a})=\pi(a)^{\dagger}=\sum_{k}|k+1/2|^{\bar{\varepsilon}^{}}|k><k|\label{10}\end{equation}
where $\bar{\varepsilon}$ is the complex conjugate of $\varepsilon$. The
norm of the eigenvalues of $\pi(a)$ are always bounded as can be
seen from the following equation,

\begin{equation}
||k+1/2|^{-\varepsilon}|=e^{-\Re e(\varepsilon)ln|k+1/2|}\label{12}\end{equation}
and the requirement that $\Re e(\varepsilon)>0$. Thus the representative
of any element in the algebra is bounded. Defining the norm for the
algebra in terms of the operator norm in this representation leads
to a pre-$C^{\star}$-algebra structure in $\mathcal{A}$. Next this
pre-$C^{\star}$-algebra structure is completed to a $C\star$-algebra
structure in $\mathcal{A}$.

\subsubsection{The Dirac operator}

The Dirac operator is taken to be,\begin{equation}
D=-i\partial\label{13}\end{equation}
this operator is selfadjoint in $\mathcal{H}$, and its resolvent
is compact. Furthermore,\begin{equation}
[D,\pi(a)]=0\;\;\;,\forall a\in\mathcal{A}\label{14}\end{equation}
which is bounded. Thus the triple fulfills all the properties required
for it to be a spectral triple.

\subsection{Dimension spectrum of the triple}

The definition of dimension spectrum of a spectral triple is briefly
reviewed. Let $\delta$ denote the derivation $\delta:L({\mathcal{H}})\to L({\mathcal{H}})$
defined by, \begin{equation}
\delta(T)=[|D|,T]\qquad,T\in L({\mathcal{H}})\label{15}\end{equation}

Let $\mathcal{B}$ denote the algebra generated by the elements,\begin{equation}
\delta^{n}(\pi(a)),\; a\subset\mathcal{A},\;\; n\geq0\;(\delta^{0}(\pi(a))=\pi(a))\label{16}\end{equation}

\begin{defn}
{[}Connes-Moscovici]\label{cm} \emph{Discrete dimension spectrum.}
A spectral triple has discrete dimension spectrum $Sd$ if $Sd\subset\mathbb{C}$
is discrete and for any element $b\in\mathcal{B}$ the function, \begin{equation}
\zeta_{b}(z)=Tr[\pi(b)\:|D|^{-z}]\label{17}\end{equation}
 extends holomorphically to $\mathbb{C}/Sd$. 
\end{defn}
\noindent This definition is given in the paper \cite{CM}, where the
additional assumption is made that,\begin{equation}
\mathcal{A}\subset\cap_{n>0}Dom\delta^{n}\;\; and\;\;[D,\mathcal{A}]\subset\cap_{n>0}Dom\delta^{n}\label{18}\end{equation}
which clearly holds  for this case since $\mathcal{B}=Imag(\delta^{0})=\mathcal{A}$.
Thus the following generic $\zeta$-function is considered,

\begin{equation}
\zeta_{P(a,\bar{a})}(z)=Tr[\pi(P(a,\bar{a}))|-i\partial|^{-z}]\label{19}\end{equation}
where $P(a,\bar{a})$ is a polynomial in $a$ and $\bar{a}$ given
by,

\begin{equation}
P(a,\bar{a})=\sum_{n,m\in\mathbb{N}+0}\, p_{nm}\, a^{n}\:\bar{a}^{m}\label{20}\end{equation}
then,\begin{eqnarray}
\zeta_{P(a,\bar{a})}(z) & = & \sum_{k\in\mathbb{Z}}<\psi_{k}|P(|-i\,\partial|^{-\varepsilon},|-i\,\partial|^{-\bar{\varepsilon}})|\psi_{k}>|k+1/2|^{-z}\nonumber \\
 & = & 2\sum_{k\in\mathbb{\mathbb{N}}+0}P(|k+1/2|^{-\varepsilon},|k+1/2|^{-\bar{\varepsilon}})\;|k+1/2|^{-z}\nonumber \\
 & = & 2\sum_{n,m\in\mathbb{\mathbb{N}}+0}\, p_{nm}\:2^{z+n\varepsilon+m\bar{\varepsilon}}\sum_{k\in\mathbb{\mathbb{N}}+0}|2k+1|^{-n\varepsilon-m\bar{\varepsilon}-z}\nonumber \\
 & = & 2\sum_{n,m\in\mathbb{\mathbb{N}}+0}\, p_{nm}\:(2^{z+n\varepsilon+m\bar{\varepsilon}}-1)\;\zeta_{R}(z+n\varepsilon+m\bar{\varepsilon})\label{21}\end{eqnarray}
where $\zeta_{R}(z)$ denotes the Riemann $\zeta$-function, which has
a simple pole at $z=1$. Therefore the poles of $\zeta_{P(a,\bar{a})}(z)$
are located at,\begin{equation}
z=1-n\varepsilon-m\bar{\varepsilon}\;\;,n,m\in\mathbb{\mathbb{N}}+0\label{22}\end{equation}
Eq. \ref{22} gives the dimension
spectrum for this space. This spectrum is plotted  in Fig.1 for $\varepsilon= 1+i/2$.
\begin{figure}[!h]\label{fff1}
\begin{center}
\begin{picture}(0,0)%
\includegraphics{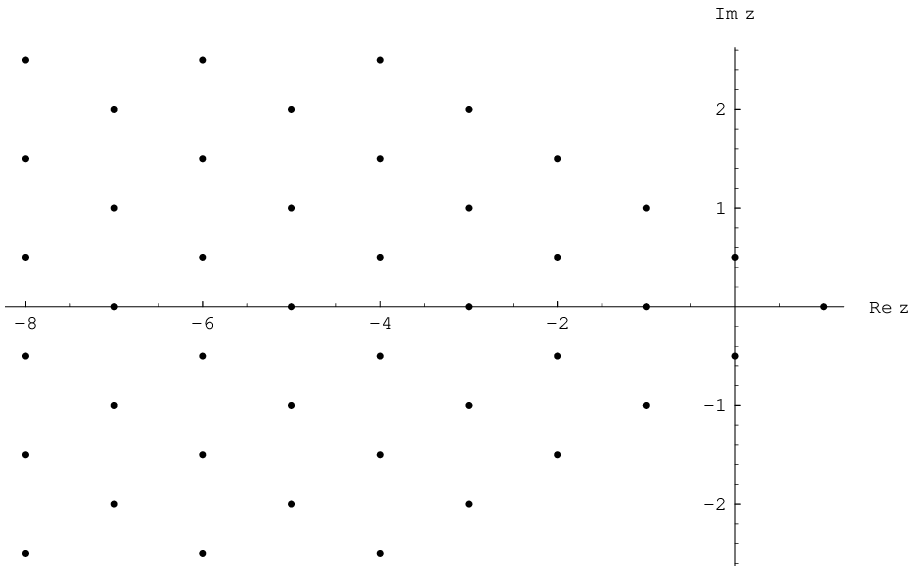}%
\end{picture}%
\setlength{\unitlength}{3947sp}%
\begingroup\makeatletter\ifx\SetFigFont\undefined%
\gdef\SetFigFont#1#2#3#4#5{%
  \reset@font\fontsize{#1}{#2pt}%
  \fontfamily{#3}\fontseries{#4}\fontshape{#5}%
  \selectfont}%
\fi\endgroup%
\begin{picture}(4410,2724)(2476,-3235)
\end{picture}
\caption{Discrete dimension spectrum for the triple of sec. \ref{ej1} with $\varepsilon= 1+i/2$.}
\label{ds}
\end{center}
\end{figure}

\section{Deformation of the canonical spectral triple on $S^{1}$}

\subsection{General setting}

Here we deal with the spectral triple $(\mathcal{A},\mathcal{H},\: D)$
where, 
\subsubsection{Algebra}
$\mathcal{A}=F(S^{1})$ is the commutative $C^*$-algebra
of smooth functions over $S^{1}$. This algebra  can be characterized by generators,
$b_{n}\;,n\in\mathbb{Z}$ satisfying the relations,\[
b_{n}b_{m}=b_{n+m}\;\;,b_{n}^{*}=b_{-n}\]
 
\subsubsection{Hilbert space}
$\mathcal{H}$=$L^{2}(S_{AP}^{1})$ is the Hilbert space of square integrable
anti-periodic functions over $S^{1}$, basis $|k>\;,k\in\mathbb{Z}$,
such that \begin{equation}
<x|k>=e^{i(k+1/2)x},n\in\mathbb{Z},x\in[0,2\pi]\:\; and\;\;<l|k>=\delta_{lk}\label{gs2}\end{equation}

\subsubsection{Representation of $\;\mathcal{A}$ in $\mathcal{H}$}
\begin{equation}
\pi(b_{n})=\sum_{k\in\mathbb{Z}}|k+n><k|\label{gs4}\end{equation}

\subsubsection{The Dirac operator}
$D$ $\in L(\mathcal{H})$ is given by, \begin{equation}
D=\sum_{k\in\mathbb{Z}}f_{\alpha}(k)\,|k><k|\label{gs3}\end{equation}
The following conditions on the functions $f_{\alpha}(k)$ are required
,

\begin{itemize}
\item $\lim_{\alpha\rightarrow0}f_{\alpha}(k)=k$. This condition ensures that
in this limit the canonical spectral triple is recovered. 
\item $\exists C>0 :|f_{\alpha}(k+n)-f_{\alpha}(k)| < C \;\; \forall k,n$.
This implies that $[D,\pi(k)]$ is bounded in $L(\mathcal{H})$. 
\item $f_{\alpha}(k)=f_{\alpha}(k)^{*}$ implies that $D$ is selfadjoint. 
\item $\lim_{k\rightarrow\infty}\frac{1}{f_{\alpha}(k)-\lambda}\rightarrow0,\qquad\forall\lambda\in\mathbb{R}$
implies that $D$ has Compact resolvent. 
\end{itemize}
The following choice of $f_{\alpha}(k)$will be considered from now
on,\begin{equation}
f_{\alpha}(k)=\frac{(k+1/2)^{\varepsilon}+[(k+1/2)^{\varepsilon}]^{*}}{2}\label{gs5}\end{equation}
 where $\varepsilon\in\mathbb{R},\;\;\varepsilon=1-\alpha,\;\alpha>0$
and the power is defined by the principal branch of the logarithm,
i.e.,\begin{equation}
(k+1/2)^{\varepsilon}=e^{\varepsilon\ln|k+1/2|+i\varepsilon\arg(k+1/2)}\label{gs6}\end{equation}
 where $\arg(k+1/2)=0\; or\;\pi$. This implies that the power in
(\ref{gs6}) is real only for integer $\varepsilon$. This fact explains why 
the complex conjugate is summed up  in (\ref{gs5}). Using
$(\ref{gs6})$ it follows that,\begin{equation}
f_{\alpha}(k)=\left\{ \begin{array}{c}
|k+1/2|^{\varepsilon}\;\;,k\ge0\\
|k+1/2|^{\varepsilon}\cos(\pi\varepsilon)\;\;,k<0\end{array}\right.\label{gs7}\end{equation}

This choice fulfills all the conditions required above.

\subsection{Dimension spectrum of the triple}\label{ssds}

The property (\ref{18}) holds for the choice (\ref{gs5}). According
to definition \ref{cm} the discrete dimension spectrum is the union
of the poles in $\zeta_{b}(z)$ for any $b\in\mathcal{B}$. Each of
such poles can be considered to describe the dimension of a certain
piece of the whole space. If we take elements $b$ in the image
of $\delta^{0}$, i.e. $b\in\mathcal{A}$, the following zeta function
is obtained, \begin{equation}
\zeta_{b_{n}}(z)=\delta_{n,0}\sum_{k}\;|f_{\alpha}(k)|^{-z}\label{ds14}\end{equation}
 For the choice (\ref{gs5}) the last equation leads to,\begin{eqnarray}
\zeta_{b_{n}}(z) & = & \delta_{n,0}(\sum_{k=0}^{\infty}\;|k+1/2|^{-\varepsilon z}+\sum_{k=-1}^{-\infty}|k+1/2|^{-\varepsilon z}\:\cos(\pi\varepsilon))\nonumber \\
 & = & \delta_{n,0}[1+\:\cos(\pi\varepsilon)]\;\;\sum_{k=0}^{\infty}\;|k+1/2|^{-\varepsilon z}\nonumber \\
\nonumber \\ & = & \delta_{n,0}[1+\:\cos(\pi\varepsilon))](2^{\varepsilon z}-1)\zeta_{R}(\varepsilon z)\label{ds15}\end{eqnarray}
 where $\zeta_{R}(z)$ denotes the Riemann zeta function . Thus $\zeta_{b_{n}}(z)$
has a simple isolated pole at,\begin{equation}
z=1/\varepsilon \label{ds16}\end{equation}
Since we are simply searching for a deformation of the canonical triple
with non-integer dimension we shall concentrate on this particular
piece of this \char`\"{}space\char`\"{}.

\subsection{Differential algebra}

Basic definition, \begin{equation}
\pi(db_{n})=[D,\pi(b_{n})]=\sum_{l}[f_{\alpha}(l+n)-f_{\alpha}(l)]\,|l+n><l|\label{da1}\end{equation}
 We look for junk forms, i.e., forms $\omega$ such that, \begin{equation}
\pi(\omega)=0\qquad,\pi(d\omega)\neq0\label{da2}\end{equation}

\begin{itemize}
\item For the case of $\omega$ a zero form, the only solution of $\pi(\omega)=0$
is $\omega=0$, and $\pi(d\omega)=0$ if $\omega=0$. Thus in this
case there are no junk 1-forms. 
\item For the case of $\omega$ a generic one form, \begin{eqnarray}
\pi(\omega) & = & \pi(\sum_{m,n}\alpha_{mn}b_{m}db_{n})\nonumber \\
&=&\sum_{m,n,l}\alpha_{mn}\,(f_{\alpha}(l+n)-f_{\alpha}(l))\;|l+m+n><l|\nonumber \\
 & = & \sum_{r,n,l}\alpha_{r-n,n}\,(f_{\alpha}(l+n)-f_{\alpha}(l))\;|l+r><l|\label{da3}\end{eqnarray}
 Now the operators $|r><s|$ and $|p><q|$ are linearly independent
if $r\neq p$ and/or $s\neq q$. Thus there can only be cancellations
in the r.h.s. of (\ref{da3}) between terms with $r=m+n$ and $l$
fixed . Hence the vanishing of $\pi(\omega)$ implies,\begin{equation}
\sum_{n}v_{n}^{(r)}\: w_{}^{(l)}(n)=0\;\;\;\;,\forall r,l\label{da7a}\end{equation}
 where,\begin{equation}
v_{n}^{(r)}=\alpha_{r-n,n}\;\;\;, w_{}^{(l)}(n)=f_{\alpha}(l+n)-f_{\alpha}(l)\label{da7b}\end{equation}
 Next, the differential of such 1-form $\omega$ with $\pi(\omega)=0$
is considered, \begin{eqnarray}
\pi(d\omega) & = & \sum_{m,n}\alpha_{mn}\,\pi(db_{m}\, db_{n})\nonumber \\
 & = & \sum_{m,n,k,l}\alpha_{mn}{}_{}\,w^{(k)}(m)\,w^{(l)}(n)|k+m><k|l+n><l|\nonumber \\
 & = & \sum_{m,n,l}\alpha_{mn}\,w^{(l+n)}(m)\,w^{(l)}(n)|l+m+n><l|\nonumber \\
 & = & \sum_{r,n,l}\alpha_{r-n,n}\,w^{(l+n)}(r-n)\,w^{(l)}(n)|l+r><l|\label{da7c'}\end{eqnarray}
\end{itemize}
below the issue of junk forms is studied both for the canonical and the deformed spectral triples.
The following  review of the canonical case will hopefully make clearer the difference with the deformed case.

\textbf{1.} $\mathbf{\alpha=0}$ $\;(f_{0}(k)=k+1/2)$. Eqs. (\ref{da7a})
and (\ref{da7b}) leads to, \begin{equation}
0=\pi(\omega)=\sum_{m,n,l}\alpha_{m,n}\, n\;|l+m+n><l|\label{da4}\end{equation}
 So, the equation to solve is, \begin{equation}
0=\sum_{n}v_{n}^{(r)}\; n=0\;\;,\forall r\label{da5}\end{equation}
 Two ways of dealing with this equation will be considered: 

\begin{enumerate}
\item [(i)] First we consider the equation, \begin{equation}
(\mathbf{v},\mathbf{N})=0\label{da6}\end{equation}
 where $\mathbf{N}$ is the vector $(-N,-N+1,\cdots,-1,0,1,\cdots,N)$
in a vector space $\mathbb{\mathbb{C}}^{2N+1}$ , $(,)$ is the canonical
scalar product in that space and the equation should be solved for
the vector $\mathbf{v}$, i.e. we look for all vectors orthogonal
to $\mathbf{N}$. Therefore, there are $2N$ independent solutions
to this equation, which can be chosen to be the following, 

\begin{enumerate}
\item For each $I=1,\cdots,N,\;\; v_{I}=1=v_{-I}$ the rest zero are $N$
solutions. 
\item $v_{0}=1$ the rest zero is solution. 
\item For $I=2,\cdots,N\;\;,v_{1}=I=-v_{-1}\;\;,v_{I}=-1=-v_{-I}$
the rest zero are $N-1$ solutions. 
\end{enumerate}
these are a set of $2N$ independent solutions, any other solution
can be expressed as a linear combination of these ones. Thus, the
solutions for $\alpha_{mn}$ are given for each $m+n=r$ fixed in
terms of the ones above by, \begin{equation}
\alpha_{r-n,n}=v_{n}^{(r)}\label{da7}\end{equation}
 For different values of $r$, there can correspond different solutions
for $v_{n}^{(r)}.$ 

\item [(ii)] Defining $v(x)$ by,\begin{equation}
v^{(r)}(x)=\sum_{n}v_{n}^{(r)}\, e^{i(n+1/2)x}\;\;,0\leq x\leq2\pi\label{da7c}\end{equation}
 eq.(\ref{da5}) can be written as,
\begin{equation}
-i\frac{d}{dx}\,\left(v^{(r)}(x)\, e^{-ix/2}\right)|_{x=0}=0\;\;\;\forall\, r\label{da7d}\end{equation}
which implies that $v^{(r)}(x)e^{-ix/2}$ can be any periodic function
with vanishing derivative at the origin. Note that the solutions appearing
in 1., in terms of $v(x)$ are: (a) $v(x)=2\, e^{ix/2}\,\cos kx\;\;,k=1,2,\cdots$
, (b)$e^{ix/2}$ (c) $v(x)=2i\, e^{ix/2}\,[k\sin(x)-\sin(kx)]\;\; k=2,3,\cdots$
whose linear combinations generate any function satisfying (\ref{da7d}).Now,
for the differential $\pi(d\omega)$ using (\ref{da7c}) we obtain,\begin{eqnarray}
\pi(d\omega) & = & \sum_{r,n,l}\alpha_{r-n,n}\,(r-n)\, n|l+r><l|\label{da7e}\end{eqnarray}
 which in terms of $v^{(r)}(x)$ can be written as, \begin{eqnarray}
\pi(d\omega)  =& & \sum_{r}\;\left\{ i\,\frac{d}{dx}\left[e^{-irx}-i\frac{d}{dx}(v^{(r)}(x)\, e^{-ix/2})^{}\right]\right\} _{x=0}\times \nonumber\\
&&\;\times \sum_{l}\,|l+r><l|\label{da7f}
\end{eqnarray}
the vanishing of the l.h.s. in (\ref{da7f}) imply the vanishing of
the expression between braces, which in turn implies, due to (\ref{da7d}),
that,\[
\left[\frac{d^{2}}{dx^{2}}(v^{(r)}(x)\, e^{-ix/2})\right]_{x=0}=0\]
thus since $\pi(\omega)=0$ does not imply $\pi(d\omega)=0$, there
are junk forms in this case. Indeed one can show that any two form
is a junk form in this case. %
\footnote{To make contact with other approaches to this problem, the following
1-form is considered,\begin{equation}
\omega_{m}=b_{m}\, db_{m}-db_{m}\, b_{m}\label{da7g}\end{equation}
 replacing (\ref{gs4}) for $\pi(b_{m})$ the following expression
is obtained,\begin{equation}
\pi(\omega_{m})=\sum_{l}\,[f_{\alpha}(l+m)-f_{\alpha}(l)-f_{\alpha}(l+2m)+f_{\alpha}(l+m)]\,|l+2m><l|\label{da7h}\end{equation}
which for the canonical case $f_{\alpha}(l)=l$ leads to $\pi(\omega_{m})=0$.
However,\begin{eqnarray}
\pi(d\omega_{m}) & = & \pi(db_{m})\pi(db_{m})\nonumber \\
 & = & \sum_{k,l}\,[f_{\alpha}(k+m)-f_{\alpha}(k)]\,[f_{\alpha}(l+m)-f_{\alpha}(l)]\,|k+m><k|l+m><l|\nonumber \\
 & = & \sum_{l}\,\,[f_{\alpha}(l+2m)-f_{\alpha}(l+m)]\,[f_{\alpha}(l+m)-f_{\alpha}(l)]\,|l+2m><l|\nonumber \\
 & = & \sum_{l}\,\, m^{2}\,|l+2m><l|\label{da7i}\end{eqnarray}
 which does not vanish. Furthermore, one can show that any two form $\omega$ can be written as,
\begin{equation}
 \omega= \sum_n \,f_n \,w_n
\end{equation}
where $f_n$ are elements of $\mathcal{A}$. Thus, leading to the result that in the canonical case any two form is a junk form.%
} 
\end{enumerate}
\textbf{2.} $\mathbf{\alpha\neq0}$. It is shown that, \begin{proposition}
There are no non-trivial junk 1-forms for $\alpha\neq0$. \end{proposition}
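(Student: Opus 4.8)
The plan is to prove the sharper statement that, for $\alpha\neq0$, the condition $\pi(\omega)=0$ already forces every coefficient $v_n^{(r)}=\alpha_{r-n,n}$ with $n\neq0$ to vanish. Since the $n=0$ contributions are $d$-trivial — indeed $b_0=\mathbf 1$, so that $w^{(l)}(0)=f_\alpha(l)-f_\alpha(l)=0$ makes them drop out of both (\ref{da3}) and (\ref{da7c'}) — this immediately yields $\pi(d\omega)=0$ and hence the absence of non-trivial junk $1$-forms. Throughout I would fix $r$ and read the kernel condition (\ref{da7a}), $\sum_n v_n^{(r)}\,w^{(l)}(n)=0$, not as a single relation but as an infinite family of relations indexed by $l$. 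This is the whole point: for $\alpha\neq0$ the weights $w^{(l)}(n)=f_\alpha(l+n)-f_\alpha(l)$ genuinely depend on $l$, whereas at $\alpha=0$ one has $w^{(l)}(n)=n$ independent of $l$ and only a single equation per $r$ survives, which is exactly why junk forms are present there.

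To exploit this $l$-dependence I would restrict to large positive $l$, say $l>N$ where $N$ bounds the finite support of $n\mapsto v_n^{(r)}$. Then both $l$ and $l+n$ are non-negative, so by (\ref{gs7}) we have $f_\alpha(k)=(k+1/2)^{\varepsilon}$ on the relevant range, and the convergent binomial expansion in $t:=(l+1/2)^{-1}<1/N$ gives $w^{(l)}(n)=(l+1/2)^{\varepsilon}\sum_{j\ge1}\binom{\varepsilon}{j}(nt)^{j}$. Substituting into (\ref{da7a}) and factoring out $(l+1/2)^{\varepsilon}$ yields $\sum_{j\ge1}\binom{\varepsilon}{j}\,M_j\,t^{\,j}=0$ for all such $l$, where $M_j:=\sum_n v_n^{(r)} n^{j}$ is the $j$-th power sum of the coefficients. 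Because $v^{(r)}$ is finitely supported, $G(t):=\sum_{j\ge1}\binom{\varepsilon}{j}M_j t^{j}$ has a positive radius of convergence, and it vanishes on the sequence $t_l=(l+1/2)^{-1}\to0$; by the identity theorem $G\equiv0$, so every Taylor coefficient vanishes, i.e. $\binom{\varepsilon}{j}M_j=0$ for all $j\ge1$.

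Since $\varepsilon=1-\alpha$ is not a non-negative integer for $\alpha\neq0$ (the degenerate value $\alpha=1$, $\varepsilon=0$, makes $D$ and hence every $w^{(l)}(n)$ vanish identically, so junk forms are trivially absent there too), each $\binom{\varepsilon}{j}\neq0$ and we conclude $M_j=\sum_n v_n^{(r)} n^{j}=0$ for all $j\ge1$. A Vandermonde argument on the finite nonzero support then forces $v_n^{(r)}=0$ for every $n\neq0$, leaving only the $d$-trivial $n=0$ component free; feeding this back into (\ref{da7c'}) gives $\pi(d\omega)=0$, completing the proof. I expect the main obstacle to be the second step — rigorously separating the different powers of $l$ so as to turn the single kernel relation into the infinite tower $\{M_j=0\}$. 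The asymptotic peeling ``the leading term must vanish, then induct'' is the underlying intuition, but controlling the tail is cleanest through the analyticity of $G$ and the accumulation of its zeros at the origin; and it is precisely here that $\alpha\neq0$ enters, switching on the coefficients $\binom{\varepsilon}{j}$ for $j\ge2$ that are absent in the canonical case.
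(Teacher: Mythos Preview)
Your proof is correct and follows the same overall strategy as the paper's: expand $w^{(l)}(n)$ by the binomial series in $n/(l+\tfrac12)$ and extract from the kernel condition the full tower of moment relations $M_j=\sum_n v_n^{(r)}n^{j}=0$ for every $j\ge1$. The paper's Lemma obtains these moments by exactly the inductive ``peeling'' you anticipate in your final paragraph---multiply (\ref{da81}) by $|l+1/2|^{j-\varepsilon}$ and send $l\to\infty$ step by step---whereas you package the same extraction more cleanly through the identity theorem applied to the analytic function $G(t)=\sum_{j\ge1}\binom{\varepsilon}{j}M_j\,t^{j}$, which vanishes on the sequence $t_l=(l+\tfrac12)^{-1}\to0$.

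You then go one step further than the paper: from $M_j=0$ for all $j\ge1$ a Vandermonde argument on the finite support forces $v_n^{(r)}=0$ for every $n\neq0$, after which the vanishing of $\pi(d\omega)$ is immediate since only the $n=0$ term survives and $w^{(l)}(0)=0$. The paper instead stops at the moment relations and asserts that substituting the binomial expansion directly into (\ref{da7c'}) annihilates each term; this works, but is less transparent because the factor $w^{(l+n)}(r-n)$ carries an additional $n$-dependence through its base point. Your sharper intermediate conclusion also makes the contrast with $\alpha=0$ structurally clear: there only the single constraint $M_1=0$ is imposed and a large kernel remains, while for $\alpha\neq0$ the non-vanishing of every $\binom{\varepsilon}{j}$ activates infinitely many independent constraints and collapses the kernel to the $d$-trivial $n=0$ sector.
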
 

\begin{proof}
Replacing (\ref{gs5}) in (\ref{da7b}) leads to,\begin{eqnarray}
w^{(l)}(n) & = & f_{\alpha}(l+n)-f_{\alpha}(l)\nonumber \\
 & = & \frac{(l+n+1/2)^{\varepsilon}+[(l+n+1/2)^{\varepsilon}]^{*}-(l+1/2)^{\varepsilon}-[(l+1/2)^{\varepsilon}]^{*}}{2}\label{da7j}\end{eqnarray}
 Employing the binomial expansion for the first two terms in the last
equation leads to,,\begin{eqnarray}
w_{}^{(l)}(n) & = & \sum_{j=1}^{\infty}\,\frac{1}{j!}\,\varepsilon(\varepsilon-1)\cdots(\varepsilon-j)\,\frac{1}{2}\left\{ [(l+1/2)^{\varepsilon-j}+[(l+1/2)^{\varepsilon-j}]^{*}\right\} \;\; n^{j}\nonumber \\
 & = & \sum_{j=1}^{\infty}\,\frac{1}{j!}\,\varepsilon(\varepsilon-1)\cdots(\varepsilon-j)\,|l+1/2|^{\varepsilon-j}\cos(\pi(\varepsilon-j))^{(1-sg(l))/2}\;\; n^{j}\label{da7k}\end{eqnarray}
 which is convergent for $|n/l|<1$ .

Next it is shown that, \begin{lemma} \begin{eqnarray}
\sum_{n}v_{n}^{(r)}\: w^{(l)}(n) & = & 0\;\;\;\;,\forall r,l\label{da81}\\
 & \Downarrow\nonumber \\
\sum_{n}v_{n}^{(r)}\, n^{m} & = & 0\;\; m=1,2,\cdots\label{da82}\end{eqnarray}
 \end{lemma} 
\begin{proof}
This follows from the following argument. Multiplying (\ref{da7k})
by $|l+1/2|^{1-\varepsilon}$ and taking the limit
$l\to\infty$, leaves only the first term in (\ref{da7k}) leading
to, \begin{equation}
\lim_{l\to\infty}|l+1/2|^{1-\varepsilon}w_{}^{(l)}(n)=n\label{da9}\end{equation}
 thus multiplying (\ref{da81}) by $|l+1/2|^{1-\varepsilon}$ and
taking the limit implies, \begin{equation}
\sum_{n}v_{n}^{(r)}\, n=0\label{da10}\end{equation}
 next multiplying (\ref{da81}) by $|l+1/2|^{2-\varepsilon}$, using
(\ref{da7k}) , (\ref{da10}) and taking the limit
$l\to\infty$ leads to, \begin{equation}
\sum_{n}v_{n}^{(r)}\, n^{2}=0\label{da11}\end{equation}
 continuing with this procedure leads to (\ref{da82}). 
\end{proof}
Replacement of the binomial expansion for $f_{\alpha}(n)$ in (\ref{da7c'})
and using (\ref{da82}) implies \begin{equation}
\pi(d\omega)=0\label{da7n}\end{equation}
 thus there are no non-trivial junk 1-forms for any of the two choices
of $f_{\alpha}(k)$. 
\end{proof}
Similar, but lengthier, arguments lead to the absence of junk forms
of any order.

The absence of junk forms has drastic consequences for the construction of the corresponding differential algebra.
No quotient is needed. In the canonical case this quotient leads  to the vanishing of any form of degree greater than one. In the deformed case there are forms of arbitrary degree and, as can be seen from expression (\ref{da3}), the differential algebra is neither anti-commutative nor commutative. This constitutes a remarkable situation since starting from a commutative $C^{\star}-$algebra in the triple, the deformed choice of the Dirac operator leads to a quite involved non-commutative differential algebra.

\subsection{Distance}

Let $\bar{\mathcal{A}}$ denote the closure of $\mathcal{A}$. A linear
functional $\varphi:\bar{\mathcal{A}}\to\mathbb{C}$ is called a state.
The set of all states being denoted by $S(\bar{\mathcal{A}})$. The
distance $d(\varphi,\psi)$ between two states $\varphi$ and $\psi$
is defined by,\begin{equation}
d(\varphi,\psi)=sup|\varphi(a)-\psi(a)|\label{d1}\end{equation}
 where the supremum is taken over all elements $a\in\bar{\mathcal{A}}$
such that,\begin{equation}
\parallel[D,a]\parallel\leq1\label{d2}\end{equation}
 where $\parallel\cdot\parallel$ denotes the operator norm. The l.h.s.
of (\ref{d2}) is given by,\begin{equation}
\parallel[D,a]\parallel=sup_{\psi}\frac{|[D,a]^{}|\psi>|}{||\psi>|}\label{d3}\end{equation}
The vector whose norm appears in the r.h.s. of (\ref{d3}) is given
by,\begin{eqnarray}
[D,a]^{}|\psi & > & =\sum_{l,m}\; a_{l-m}\,[f_{\alpha}(l)-f_{\alpha}(m)]\;\psi_{m}\;|l>\label{d4}\end{eqnarray}
where $\psi_{m}$ denotes the Fourier component of $|\psi>$($|\psi>=\sum_{m}\psi_{m}|m>$).
The fact that for $\alpha\neq0$ the quantity $f_{\alpha}(l)-f_{\alpha}(m)$
does not depend only on $l-m$ makes the evaluation of the norm of
this vector for arbitrary $|\psi>$ and $a(x)$ a non-trivial task.
It is true however that,\begin{equation}
f_{\alpha}(l)-f_{\alpha}(m)=f_{\alpha}(l-m)+O(\alpha)\label{d5}\end{equation}
therefore defining the following Liouville $\varepsilon$-derivative\footnote{This definition is an example of what is known in the literature as fractional derivative. See for example ref. \cite{old} for an overview of the subject.}(which
reduces to the usual derivative for $\varepsilon=1(\alpha=0)$ ),\begin{equation}
\frac{d^{\epsilon}}{dx}g(x)=i\;\sum_{l}\; g_{l}\, f_{\alpha}(l)e^{-i(l+1/2)x}\label{d6}\end{equation}
(\ref{d4}) implies,\begin{eqnarray}
<x|[D,a]^{}|\psi> & = & \sum_{l,m}\; a_{l-m}\, f(l-m)\;\psi_{m}\;<x|l>+O(\alpha)\nonumber \\
 & = & \psi(x)\,(-i\frac{d^{\epsilon}}{dx}a(x))+O(\alpha)\label{d7}\end{eqnarray}
thus,\begin{equation}
\parallel[D,a]\parallel=sup_{\psi}\sqrt{\frac{\int\, dx\:|\psi(x)|^{2}\,|\frac{d^{\epsilon}}{dx}a(x)|^{2}}{\int\, dx\:|\psi(x)|^{2}}}+O(\alpha)\label{d8''}\end{equation}
hence condition (\ref{d2}) implies,\begin{equation}
|\frac{d^{\epsilon}}{dx}a(x)|\leq1+O(\alpha)\;\;\forall x^{}\label{d9''}\end{equation}

\begin{example}
\emph{(i) The canonical triple\cite{con} over $S_{1}$.} 

In this case $f_{\alpha}(l)=l$ and the $\varepsilon$-derivative
reduces to the usual derivative. Thus (\ref{d2}) leads to, \begin{equation}
|\frac{d^{}}{dx}a(x)|\leq1\;\;\;\;\forall\, x\label{d8}\end{equation}
 This last inequality implies that the functions $a(x)$ over which
the supremun is taken are Lipschitz, i.e. , they satisfy%
\footnote{To see this note that (\ref{d8}) implies\begin{equation}
\int_{x_{0}}^{x_{1}}\, dx\,|\frac{d^{}}{dx}a(x)|\leq x_{1}-x_{0}\;\;\;\;\;,x_{1}>x_{0}\label{d81}\end{equation}
 noting that,\begin{equation}
\left|\int_{x_{0}}^{x_{1}}\, dx\,\frac{d^{}}{dx}a(x)\right|\leq\int_{x_{0}}^{x_{1}}\, dx\,|\frac{d^{}}{dx}a(x)|\label{d82}\end{equation}
 eq. (\ref{d80}) is obtained, i.e.,\begin{equation}
\left|\int_{x_{0}}^{x_{1}}\, dx\,\frac{d^{}}{dx}a(x)\right|=|a(x_{1})-a(x_{0})|\leq|x_{1}-x_{0}|\label{d83}\end{equation}
},\begin{equation}
|a(x_{1})-a(x_{0})|\leq|x_{1}-x_{0}|\label{d80}\end{equation}
 Next consider the state $\delta_{x_{0}}$ defined by,\begin{equation}
\delta_{x_{0}}(a(x))=a(x_{0})\label{d9}\end{equation}
 the distance between the two states $\delta_{x_{1}}$ and $\delta_{x_{0}}$
is, according to (\ref{d1}),\begin{equation}
d(\delta_{x_{1}},\delta_{x_{0}})=sup|\delta_{x_{1}}(a)-\delta_{x_{0}}(a)|=sup|a(x_{1})-a(x_{0})|\label{d10}\end{equation}
 thus (\ref{d80}) implies,\begin{equation}
d(\delta_{x_{1}},\delta_{x_{0}})\leq|x_{1}-x_{0}|\label{d101}\end{equation}
 On the other hand, consider the distance function to a fixed point $x_{0}$ ,\begin{equation}
d_{x_{0}}(x)=|x-x_{0}|\label{d84}\end{equation}
 which is clearly a function satisfying,\begin{equation}
\parallel[D,d_{x_{0}}]\parallel\leq1\label{d85}\end{equation}
 thus, it is clear that,\begin{equation}
d(\delta_{x_{1}},\delta_{x_{0}})\geq|d_{x_{0}}(x_{1})-d_{x_{0}}(x_{0})|=|x_{1}-x_{0}|\label{d111}\end{equation}
 hence, (\ref{d101}) and (\ref{d111}) imply that,\begin{equation}
d(\delta_{x_{1}},\delta_{x_{0}})=|x_{1}-x_{0}|\label{d112}\end{equation}

(ii) $\alpha\neq0$

In what follows we neglect the $O(\alpha)$ corrections in (\ref{d9''}). It turns out to be convenient to define the following kernel%
\footnote{The last equality in eq.(\ref{d113}) is obtained replacing in the
r.h.s. of the first equality the following integral representation
of the fractional power of $l+1/2-i\eta,$\[
(l+1/2-i\eta)^{-\varepsilon}=\frac{i^{\varepsilon}}{\Gamma(\varepsilon)}\int_{0}^{\infty}\, d\tau\,\tau^{\varepsilon-1}\, e^{-i(l+1/2-i\eta)\tau}\]
 Note the role played by the parameter $\eta$ for the convergence
of the integral in the r.h.s. of the last equation.%
},\begin{eqnarray}
d_{-\varepsilon}(x-y) & = & \lim_{\eta\to0}\;\sum_{l}\,\frac{1}{2\,\cos(\pi\varepsilon)}\,\left\{ [-(l+1/2-i\eta)]^{-\varepsilon}+c.c.\right\} \, e^{-i(l+1/2)(x-y)}\nonumber \\
 & = & \lim_{\eta\to0}\:\sum_{l}\,|l+1/2-i\eta|^{-\varepsilon}\,[\cos(\pi\varepsilon)]^{\frac{sg(l)-1}{2}}\, e^{-i(l+1/2)(x-y)}\nonumber \\
 & = & \frac{1}{2\,\cos(\pi\varepsilon)}\,\frac{|x|^{\varepsilon-1}}{\Gamma(\varepsilon)}\, e^{i\frac{\pi}{2}\varepsilon sg(x)}\label{d113}\end{eqnarray}
which has the following useful property, \[
\int_{-\pi}^{\pi}\, dy\, d_{-\varepsilon}(x-y)\;\frac{d^{\varepsilon}}{dy}a(y).=a(x)\]
and consider the following operator,\[
D_{-\varepsilon}^{\alpha\beta}g(x)=\frac{1}{2\pi}\int_{-\pi}^{\pi}\, dy\:|d_{-\varepsilon}(\beta-y)-d_{-\varepsilon}(\alpha-y)|\, g(y)\]
which is chosen to produce the following result,\begin{eqnarray*}
D_{-\varepsilon}^{\alpha\beta}|\frac{d^{\varepsilon}}{dx}a(x)| & = & \frac{1}{2\pi}\int_{-\pi}^{\pi}\, dy\:|d_{-\varepsilon}(\beta-y)-d_{-\varepsilon}(\alpha-y)\,\frac{d^{\varepsilon}}{dy}a(y)|\\
 & \geq & |\frac{1}{2\pi}\int_{-\pi}^{\pi}\, dy\:(d_{-\varepsilon}(\beta-y)-d_{-\varepsilon}(\alpha-y))\,\frac{d^{\varepsilon}}{dy}a(y)|\\
 & = & |a(\beta)-a(\alpha)|\end{eqnarray*}
therefore applying the operator $D_{-\varepsilon}^{\alpha\beta}$to
both sides of (\ref{d9''}) implies,\[
|a(\beta)-a(\alpha)|\leq D_{-\varepsilon}^{\alpha\beta}1\]
it is not difficult to see that (\ref{d113}) implies that $D_{-\varepsilon}^{\alpha\beta}1$
depends only of $|\beta-\alpha|$, i.e.,\[
D_{-\varepsilon}^{\alpha\beta}1=\frac{D_{-\varepsilon}(|\beta-\alpha|)}{2\pi}\]
Using the last equality in (\ref{d113}) leads to the following expression
for $D_{-\varepsilon}(x)$ in,\[
D_{-\varepsilon}(x)=\mathcal{F}(\varepsilon)\left\{ C(\varepsilon)|x|^{\varepsilon}+\frac{1}{\varepsilon}[(\pi+x)^{\varepsilon}+(\pi-x)^{\varepsilon}-2\pi^{\varepsilon}]\right\} \]
where,\begin{eqnarray*}
\mathcal{F}(\varepsilon) & = & |2\Gamma(\varepsilon)\cos(\pi\varepsilon)|^{-1}\\
C(\varepsilon) & = & \int_{0}^{1}dy\;\left\{ |1-y|^{\varepsilon-1}+|y|^{\varepsilon-1}-2|1-y|^{\varepsilon-1}\,|y|^{\varepsilon-1}\;\cos(\pi\varepsilon)\right\} \end{eqnarray*}
for the case $\varepsilon=1$ these expressions give $\mathcal{F}(1)=1/2$
and $C(1)=2$ thus leading to the usual result, $D_{-1}(x)=|x|$.

To obtain the distance function one repeats the argument between eqs.
(\ref{d80}) and (\ref{d112}) obtaining,\[
d(\delta_{x_{1}},\delta_{x_{0}})=\frac{D_{-\varepsilon}(|x_{1}-x_{0}|)}{2\pi}\]
In Fig. 2,3 and 4 the function $D_{-\varepsilon}(x)$ is plotted for $\varepsilon=1,0.9$ and $0.8$ respectively. According to subsection \ref{ssds}
these values of $\varepsilon$ correspond to spaces with dimension $d=1/\varepsilon=1,1.111\cdots , 1.25$. Including the $O(\alpha)$ corrections to (\ref{d9''}) would produce corrections of $O(\alpha^2)$ to $D_{-\varepsilon}(x)$.
\end{example}
\section{Outlook}
The results presented in this work may be considered as first steps in investigating these geometries.
Many additional issues deserve to be studied. Some of them are: integration, operatorial structure of the differential algebra, metric volume element, gauge theories ....
\section{Acknowledgments}
I want to express my gratitude to S. Capriotti, C. Fosco, S. Grillo and F. Mendez for valuable comments and suggestions. 
\newpage

\begin{figure}[htbp]
\begin{center}
\begin{picture}(0,0)%
\includegraphics{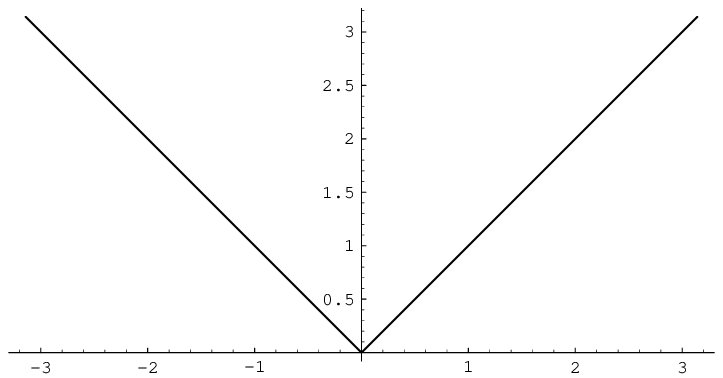}%
\end{picture}%
\setlength{\unitlength}{4144sp}%
\begingroup\makeatletter\ifx\SetFigFont\undefined%
\gdef\SetFigFont#1#2#3#4#5{%
  \reset@font\fontsize{#1}{#2pt}%
  \fontfamily{#3}\fontseries{#4}\fontshape{#5}%
  \selectfont}%
\fi\endgroup%
\begin{picture}(3660,1862)(1801,-1797)
\put(3646,-106){\makebox(0,0)[lb]{\smash{{\SetFigFont{12}{14.4}{\rmdefault}{\mddefault}{\updefault}{\color[rgb]{0,0,0}$D_{-1}(x)$}%
}}}}
\put(5446,-1726){\makebox(0,0)[lb]{\smash{{\SetFigFont{12}{14.4}{\rmdefault}{\mddefault}{\updefault}{\color[rgb]{0,0,0}$x$}%
}}}}
\end{picture}
\caption{Distance for a space of dimension $d=1$.}
\label{dis1}
\end{center}
\end{figure}
\begin{figure}[htbp]
\begin{center}
\begin{picture}(0,0)%
\includegraphics{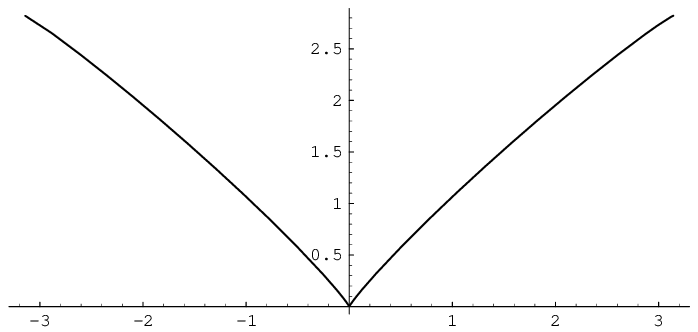}%
\end{picture}%
\setlength{\unitlength}{4144sp}%
\begingroup\makeatletter\ifx\SetFigFont\undefined%
\gdef\SetFigFont#1#2#3#4#5{%
  \reset@font\fontsize{#1}{#2pt}%
  \fontfamily{#3}\fontseries{#4}\fontshape{#5}%
  \selectfont}%
\fi\endgroup%
\begin{picture}(3570,1744)(2116,-2399)
\put(3916,-826){\makebox(0,0)[lb]{\smash{{\SetFigFont{12}{14.4}{\rmdefault}{\mddefault}{\updefault}{\color[rgb]{0,0,0}$D_{-0.9}(x)$}%
}}}}
\put(5671,-2311){\makebox(0,0)[lb]{\smash{{\SetFigFont{12}{14.4}{\rmdefault}{\mddefault}{\updefault}{\color[rgb]{0,0,0}$x$}%
}}}}
\end{picture}
\caption{Distance for a space of dimension $d=1.11\cdots$.}
\label{dis09}
\end{center}
\end{figure}
\begin{figure}[!h]
\begin{center}
\begin{picture}(0,0)%
\includegraphics{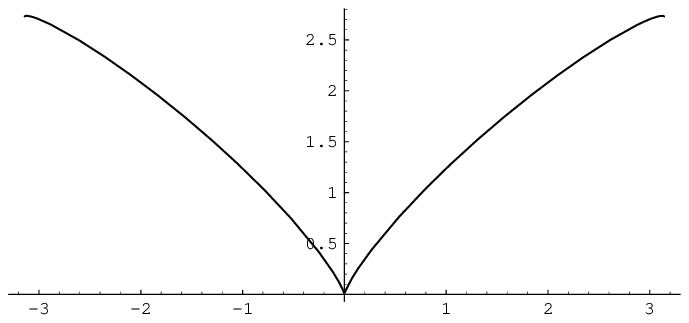}%
\end{picture}%
\setlength{\unitlength}{4144sp}%
\begingroup\makeatletter\ifx\SetFigFont\undefined%
\gdef\SetFigFont#1#2#3#4#5{%
  \reset@font\fontsize{#1}{#2pt}%
  \fontfamily{#3}\fontseries{#4}\fontshape{#5}%
  \selectfont}%
\fi\endgroup%
\begin{picture}(3480,1675)(1801,-1520)
\put(3511,-16){\makebox(0,0)[lb]{\smash{{\SetFigFont{12}{14.4}{\rmdefault}{\mddefault}{\updefault}{\color[rgb]{0,0,0}$D_{-0.8}(x)$}%
}}}}
\put(5266,-1456){\makebox(0,0)[lb]{\smash{{\SetFigFont{12}{14.4}{\rmdefault}{\mddefault}{\updefault}{\color[rgb]{0,0,0}$x$}%
}}}}
\end{picture}%
\caption{Distance for a space of dimension $d=1.25$.}
\label{dis08}
\end{center}
\end{figure}
\bibliographystyle{my-h-elsevier}

\end{document}